\def\lcm{{\rm lcm}}
\def\fmod#1 #2{#1\ ({\rm mod}\ #2)}
\def\sep{{\rm sep}}
\def\nsep{{\rm nsep}}
\def \endpf{{\hfill $\Box$ \medbreak}}
\author{Erik D. Demaine\inst{1}, Sarah Eisenstat\inst{1},
Jeffrey Shallit\inst{2}, and David A. Wilson\inst{1}}
\title{Remarks on Separating Words}
\institute{MIT Computer Science and Artificial Intelligence Laboratory,
32 Vassar Street,
Cambridge, MA 02139,
USA,
\email{\{edemaine,seisenst,dwilson\}@mit.edu}
\and
University of Waterloo,
Waterloo, ON  N2L 3G1, Canada,
\email{shallit@cs.uwaterloo.ca}
}
\begin{document}

\newtheorem{openproblem}{Open Problem}

\maketitle

\begin{abstract}
The {\it separating words problem} asks for the size of the smallest
DFA needed to distinguish between two words of length $\leq n$ (by
accepting one and rejecting the other).  In this paper we survey what
is known and unknown about the problem, consider some
variations, and prove several new results.
\end{abstract}

\section{Introduction}

Imagine a computing device with very limited powers.
What is the simplest computational problem you could ask it to solve?
It is not the addition of two numbers, nor sorting, nor string
matching --- it is telling two inputs apart:  distinguishing them in
some way.

Take as our computational model the deterministic finite automaton or
DFA.  As usual, it consists of a $5$-tuple, $M = (Q, \Sigma, \delta,
q_0, F)$, where $Q$ is a finite nonempty set of states,
$\Sigma$ is a nonempty input alphabet, $\delta: Q \times \Sigma \rightarrow
Q$ is the transition function (assumed to be {\it complete}, or defined
on all members of its domain), $q_0 \in Q$ is the initial state, and $F 
\subseteq Q$ is a set of final states.  

We say that a DFA $M$ {\it separates} $w$ and $x$ if $M$ accepts one but rejects
the other.
Given two distinct words $w, x$ we let $\sep(w,x)$ be the number of states
in the smallest DFA accepting $w$ and rejecting $x$.
For example, the DFA below separates $0010$ from $1000$.
\begin{figure}[H]
\begin{center}
\epsfig{file=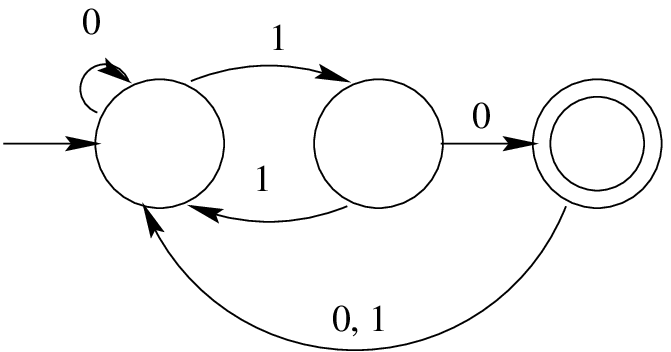,height=4cm}
\end{center}
\label{data140}
\end{figure}
However, by a brief computation, we see that
no $2$-state DFA can separate these two words.  So
$\sep(1000,0010) = 3$.
Note that $\sep(w, x) = \sep(x, w)$,
because the language of a DFA can be complemented
by swapping the reject and accept states.

We let $S(n) = \displaystyle\max_{{w\not= x} \atop {|w|, |x| \leq n}} \sep(w,x)$.
The {\it separating words problem} is to determine
good upper and lower bounds on $S(n)$.
This problem
was introduced 25 years ago by Goral\v c\'{\i}k
and Koubek \cite{Goralcik&Koubek:1986},
who proved $S(n) = o(n)$.
It was later studied by Robson \cite{Robson:1989,Robson:1996}, who
obtained the best upper bound so far:  $S(n) = O(n^{2/5} (\log n)^{3/5})$.

As an additional motivation, the separating words problem can be
viewed as an inverse of a classical problem from the early days of
automata theory: given two DFAs accepting different languages, what
length of word suffices to distinguish them?  More precisely, given
two DFAs $M_1$ and $M_2$, with $m$ and $n$ states, respectively, with
$L(M_1) \not= L(M_2)$, what is a good bound on the length of the
shortest word accepted by one but not the other?  The usual
cross-product construction quickly gives an upper bound of $mn-1$
(make a DFA for $L(M_1) \ \cap \ \overline{L(M_2)}$).  But the optimal
upper bound of $m+n-2$ follows from the usual algorithm for minimizing
automata.  Furthermore, this bound is best possible \cite[Thm.\
3.10.6]{Shallit:2009}.  For NFAs the bound is exponential in $m$ and
$n$ \cite{Nozaki:1979}.

From the following result, already proved by Goral\v c\'{\i}k
and Koubek \cite{Goralcik&Koubek:1986},
we know that the hard case of word separation
comes from words of equal length:

\begin{proposition}
Suppose $|w|, |x| \leq n$ and $|w| \not= |x|$.  Then 
$\sep(w,x) = O(\log n)$. Furthermore, there is an infinite class of
examples where $\sep(w,x) = \Omega(\log n)$.
\label{one}
\end{proposition}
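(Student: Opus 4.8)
The plan is to prove the two directions separately, both resting on the classical Chebyshev estimates $\log \lcm(1,2,\ldots,m) = \Theta(m)$ (equivalently $\lcm(1,\ldots,m) = 2^{\Theta(m)}$), which predate and are weaker than the prime number theorem.

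For the upper bound, assume without loss of generality that $|w| < |x|$ and set $d = |x| - |w|$, so that $1 \le d \le n$. The key observation is that if \emph{every} integer in $\{1,2,\ldots,m\}$ divided $d$, then $\lcm(1,\ldots,m)$ would divide $d$ and hence satisfy $\lcm(1,\ldots,m) \le d \le n$. Using the elementary lower bound $\lcm(1,\ldots,m) \ge 2^{m-1}$, this forces $m = O(\log n)$; contrapositively, there is a modulus $m = O(\log n)$ with $m \nmid d$, and for that $m$ we have $|w| \not\equiv |x| \pmod{m}$. I would then exhibit the separating automaton explicitly: take the ``mod-$m$ counter'' with state set $\{0,1,\ldots,m-1\}$, every letter of $\Sigma$ incrementing the current state by $1$ modulo $m$, start state $0$, and single accepting state $|w| \bmod m$. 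Reading any word of length $\ell$ lands in state $\ell \bmod m$, so $w$ is accepted while $x$ is rejected. This yields $\sep(w,x) \le m = O(\log n)$.

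For the lower bound, I would use the unary family $w = 0^{N}$ and $x = 0^{2N}$ with $N = \lcm(1,2,\ldots,k)$, so that $n = 2N$ and, by Chebyshev's \emph{upper} estimate $\log N = O(k)$ together with $N \ge 2^{k-1}$, we get $k = \Theta(\log n)$. Given any DFA with $s$ states separating these two words, restrict attention to the states reached by reading $0$'s from $q_0$; since both inputs are unary, only these matter. This sequence of states is eventually periodic, with a tail of length $t$ and a cycle of length $c$ where $t + c \le s$. If $t > N$ then already $s > N \gg k$, so I may assume both indices $N$ and $2N$ lie in the cyclic part. Since the automaton accepts $0^N$ but rejects $0^{2N}$, these land in different states, and tracing indices around the cycle forces $N \not\equiv 2N \pmod{c}$, i.e.\ $c \nmid N$. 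As every integer up to $k$ divides $N = \lcm(1,\ldots,k)$, this gives $c \ge k+1$, whence $s \ge c \ge k+1 = \Omega(\log n)$.

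I expect the main obstacle to be the quantitative number theory underlying both steps, namely the two-sided estimate $\lcm(1,\ldots,m) = 2^{\Theta(m)}$: the lower bound $\lcm(1,\ldots,m) \ge 2^{m-1}$ (needed for the upper bound on $\sep$) is elementary, while the matching upper bound $\log \lcm(1,\ldots,k) = O(k)$ (needed to relate $k$ back to $\log n$ in the lower bound) is Chebyshev's bound on $\psi$. The remaining care is in the lower bound's boundary case, where one of the two words could fall in the tail rather than the cycle; everything else is a routine automaton construction and modular-arithmetic bookkeeping.
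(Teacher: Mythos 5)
Your proof is correct, and it follows the same two-part skeleton as the paper --- a modular length-counter for the upper bound, and eventual periodicity of unary state sequences combined with $\lcm(1,\ldots,k)$ for the lower bound --- but it routes the number theory differently in both halves. For the upper bound, the paper invokes a lemma of Shallit and Breitbart guaranteeing a \emph{prime} $p \leq 4.4 \log n$ with $|w| \not\equiv \fmod{|x|} {p}$, which rests on explicit prime-counting estimates; you instead take the least non-divisor $m$ of $d = |x|-|w|$ and bound it by $O(\log n)$ via the elementary estimate $\lcm(1,\ldots,m) \geq 2^{m-1}$, so your modulus need not be prime and you avoid the prime number theorem entirely. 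For the lower bound, the paper uses the pair $0^{n-1}$ versus $0^{n-1+\lcm(1,2,\ldots,n)}$, where choosing the preperiod index $n-1$ makes the tail/cycle analysis automatic (any $n$-state unary walk has preperiod at most $n-1$ and period at most $n$), and then quotes the PNT asymptotic $\lcm(1,\ldots,n) = e^{n(1+o(1))}$; your pair $0^{N}$ versus $0^{2N}$ with $N = \lcm(1,\ldots,k)$ is a minor variant requiring the explicit case split on whether the indices fall in the tail, which you handle correctly, and you need only Chebyshev's one-sided bound $\log \lcm(1,\ldots,k) = O(k)$ to convert $k$ into $\Theta(\log n)$. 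The trade-off: the paper's prime-based lemma gives a clean explicit constant and is reused later (e.g., in the fingerprint results), while your version is self-contained and strictly more elementary, needing nothing beyond Chebyshev-level estimates.
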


We use the following lemma \cite{Shallit&Breitbart:1996}:

\begin{lemma}
If $0 \leq i, j \leq n$ and $i \not= j$, then there is a prime
$p \leq 4.4 \log n$ such that $i \not\equiv \fmod{j} {p}$.
\label{pnt}
\end{lemma}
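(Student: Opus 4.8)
The plan is to argue by contraposition. Suppose, for contradiction, that $i \equiv \fmod{j} {p}$ for \emph{every} prime $p \le 4.4\log n$. Writing $B := 4.4\log n$, the congruence $i \equiv \fmod{j}{p}$ says exactly that $p \mid (i-j)$. Since distinct primes that each divide a number also have their product dividing it, this forces $\prod_{p \le B} p$ to divide $i-j$. Because $0 \le i,j \le n$ and $i \ne j$, we have $1 \le |i-j| \le n$, and therefore
\[
\prod_{p \le B} p \;\le\; |i-j| \;\le\; n .
\]

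The heart of the argument is to contradict this by bounding the primorial $\prod_{p \le B} p$ from below. I would invoke the Chebyshev-type estimate $\prod_{p \le x} p > 2^{x/3}$ (equivalently $\sum_{p \le x} \log p > \tfrac{x}{3}\log 2$), valid for all real $x \ge 2$. Substituting $x = B = 4.4\log n$ yields
\[
\prod_{p \le 4.4\log n} p \;>\; 2^{(4.4\log n)/3} \;=\; n^{(4.4\log 2)/3} \;=\; n^{1.016\ldots} \;>\; n
\]
for $n \ge 2$, where the last step uses $4.4 \cdot \tfrac{\log 2}{3} > 1$. This contradicts the bound of the previous paragraph, so some prime $p \le 4.4\log n$ must satisfy $i \not\equiv \fmod{j}{p}$, as desired.

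The step I expect to be the main obstacle is pinning down the explicit constant in the primorial lower bound, and this is precisely what dictates the value $4.4$. Setting $\theta(x) := \sum_{p \le x}\log p$, the Prime Number Theorem gives $\theta(x)/x \to 1$, but for small $x$ the ratio dips: it is smallest in the limit $x \to 3^-$, where only the prime $2$ contributes, so $\theta(x) = \log 2$ and $\theta(x)/x \to (\log 2)/3 \approx 0.231$. Hence $\theta(x) > \tfrac{\log 2}{3}\,x$ for every $x \ge 2$, and making $n^{(4.4\log 2)/3}$ exceed $n$ requires exactly $4.4 > 3/\log 2 \approx 4.328$. To establish the uniform bound $\theta(x) > \tfrac{\log 2}{3}x$ rigorously, I would combine a standard explicit Chebyshev lower estimate (of Rosser--Schoenfeld type, valid for $x$ beyond a fixed threshold) with a direct finite verification over the remaining small range, where the borderline case is $x$ just below $3$ as noted above.
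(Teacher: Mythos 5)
Your proof is correct and is essentially the canonical argument for this lemma, which the paper itself does not prove but simply cites from Shallit and Breitbart: by contraposition every prime $p \le 4.4\log n$ would divide $i-j$, so $e^{\theta(4.4\log n)} \le |i-j| \le n$, contradicted by $\theta(x) > \frac{\log 2}{3}x$ since $4.4\cdot\frac{\log 2}{3} > 1$. Your analysis of the constant is also right --- the infimum of $\theta(x)/x$ over $x \ge 2$ is $(\log 2)/3$, approached only as $x \to 3^-$, which is why any constant exceeding $3/\log 2 \approx 4.328$ works; the lone implicit caveat is that the statement presumes $n \ge 2$ (for $n = 1$ there are no primes $\le 4.4\log n = 0$), which your verification $4.4\log 2 > 3$ already accommodates.
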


\begin{proof} (of Proposition \ref{one})
If $|w| \not= |x|$, then
by Lemma~\ref{pnt}
there exists a prime $p \leq 4.4 \log n$ such that ${|w| \bmod p}
\not= {|x| \bmod p}$.  
Hence a simple cycle of $p$ states serves to
distinguish $w$ from $x$.

On the other hand, no DFA with $n$
states can distinguish
$$ 0^{n-1} \ \ \ \text{from} \ \ \ 0^{n-1+\lcm(1,2,\ldots, n)}. $$
To see this, 
let $p_i = \delta(q_0, 0^i)$ for $i \geq 0$.  Then $p_i$ is
ultimately periodic with period $\leq n$ and preperiod at most $n-1$.
Thus $p_{n-1} = p_{n-1+\lcm(1,2,\ldots, n)}$.
Since $\lcm(1,2, \ldots, n) = e^{n(1+o(1))}$
by the prime number theorem, the $\Omega(\log n)$ lower bound follows.
\endpf
\end{proof}

As an example, 
suppose $|w| = 22$ and $|x| = 52$.  Then $|w| \equiv \fmod{1} {7}$ and
$|x| \equiv \fmod{3} {7}$.  So we can accept $w$ and reject $x$ with a DFA that uses a cycle of
size $7$, as follows:

\begin{figure}[H]
\begin{center}
\input cycle1.pstex_t
\end{center}
\label{cycle1}
\end{figure}

In what follows, then, we only consider the case of equal-length words,
and we redefine $S(n) = \displaystyle\max_{{w\not= x}\atop {|w|=|x|=n}} \sep(w,x)$.
The goal of the paper is to survey what is known and unknown, and
to examine some variations on the original problem.  Our main new
results are Theorems~\ref{hamming} and \ref{nondets}.

\section{Independence of alphabet size}

As we have defined it, $S(n)$ could conceivably depend on the size of
the alphabet $\Sigma$.
Let $S_k (n)$ be the maximum number of states needed to separate
two length-$n$ words over an alphabet of size $k$.
Then we might have a different value $S_k(n)$ depending on $k = |\Sigma|$.
The following result shows this is not the case for $k \geq 2$.
This result was stated in \cite{Goralcik&Koubek:1986} without proof;
we supply a proof here.

\begin{proposition}
For all $k \geq 2$ we have $S_k(n) = S_2(n)$.
\end{proposition}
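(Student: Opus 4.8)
The plan is to prove two inequalities, $S_k(n) \ge S_2(n)$ and $S_k(n) \le S_2(n)$, whose combination gives the result.

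The inequality $S_k(n) \ge S_2(n)$ is immediate. Every binary word of length $n$ is also a word over a $k$-letter alphabet $\Sigma \supseteq \{0,1\}$, and for a pair $w,x$ of binary words the number of states in a smallest separating DFA is unchanged whether we work over $\{0,1\}$ or over $\Sigma$: a binary separating DFA extends to a $k$-ary one with the same states by defining the transitions on the extra letters arbitrarily, and conversely a $k$-ary separating DFA restricts to a binary one on $\{0,1\}$ without changing the state count. Hence the binary pairs form a subfamily of the $k$-ary pairs on which $\sep$ takes the same values, so $S_k(n)\ge S_2(n)$.

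For the reverse inequality I would use an alphabet-collapsing projection together with a pullback of automata. Fix distinct $k$-ary words $w,x$ with $|w|=|x|=n$, pick a position $i$ where $w_i \ne x_i$, and set $a = w_i$. Define $\pi\colon \Sigma \to \{0,1\}$ by $\pi(a)=1$ and $\pi(b)=0$ for $b \ne a$, extended letterwise to words. At position $i$ we have $\pi(w)_i = 1 \ne 0 = \pi(x)_i$, so $\pi(w)$ and $\pi(x)$ are distinct binary words of length $n$. Given any binary DFA $M=(Q,\{0,1\},\delta,q_0,F)$ that separates $\pi(w)$ from $\pi(x)$, define a $k$-ary DFA $M'$ on the same state set by $\delta'(q,c) = \delta(q,\pi(c))$ for every $c\in\Sigma$. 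Then the run of $M'$ on any $k$-ary word $u$ coincides with the run of $M$ on $\pi(u)$, so $M'$ accepts $w$ iff $M$ accepts $\pi(w)$, and likewise for $x$; hence $M'$ separates $w$ and $x$ using $|Q|$ states. This shows $\sep(w,x) \le \sep(\pi(w),\pi(x)) \le S_2(n)$, and maximizing over all $k$-ary pairs yields $S_k(n)\le S_2(n)$.

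The one point that needs care — and the step I would flag as the main obstacle — is that the collapse $\pi$ must keep $\pi(w)$ and $\pi(x)$ distinct, since $\sep$ is defined only for distinct words; this is precisely why $\pi$ singles out the symbol $a$ occurring at a position where $w$ and $x$ disagree, rather than collapsing the alphabet arbitrarily. Everything else is routine: $M'$ is complete because $M$ is complete and $\pi$ is total, and the pullback manifestly does not increase the number of states. Combining the two inequalities gives $S_k(n)=S_2(n)$ for all $k\ge 2$.
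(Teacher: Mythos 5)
Your proof is correct and takes essentially the same approach as the paper: the paper likewise picks a position where the two words differ, collapses the $k$-letter alphabet to $\{0,1\}$ by sending one of the two differing letters to $1$ and all other letters to $0$ (which guarantees the binary images remain distinct), and pulls a binary separating DFA back by relabeling its transitions over $\Sigma \setminus \{b\}$ and $\{b\}$, with the easy inequality $S_2(n) \leq S_k(n)$ handled just as you do. Your version merely spells out more explicitly the pullback $\delta'(q,c) = \delta(q,\pi(c))$ and the completeness check that the paper leaves implicit.
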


\begin{proof}
Suppose $x, y$ are distinct length-$n$ words 
over an alphabet $\Sigma$ of size $k > 2$. 
Then $x$ and $y$ must differ in some position, say 
for $a \not= b$,
\begin{eqnarray*}
x & = & x' \ a \ x'',\\
y & = & y' \ b \ y'',
\end{eqnarray*}
for $|x'| = |y'|$.

Now map $a$ to $0$, $b$ to $1$ and map all other letters of $\Sigma$
to $0$.
This gives two new distinct binary words $X$ and $Y$ of length $n$.
If $X$ and $Y$ can be separated by an $m$-state DFA, then
so can $x$ and $y$, by renaming transitions of the DFA
to be over $\Sigma \backslash b$ and $\{b\}$ instead of $0$ and $1$, respectively.
Thus $S_k (n) \leq S_2 (n)$.  But clearly $S_2 (n) \leq S_k (n)$, since
every binary word can be considered as a word over the larger alphabet
$\Sigma$.  So $S_k (n) = S_2 (n)$.
\endpf
\end{proof}

\section{Average case}

One frustrating aspect of the separating words problem is that nearly
all pairs of words can be easily separated.    This means that bad
examples cannot be easily produced by random search.

\begin{proposition}
  Consider a pair of words $(w,x)$ selected uniformly from the set of
  all pairs of unequal words of length $n$ over an alphabet of size
  $k$.  Then the expected number of states needed to separate $w$ from
  $x$ is $O(1)$.
\end{proposition}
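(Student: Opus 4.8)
The plan is to exploit the fact that two random words almost surely disagree within their first few symbols, and that a pair whose first disagreement occurs at position $i$ can be separated by a trivial DFA of size $O(i)$. Combining a per-pair bound that is merely linear in the first-disagreement position with the geometric decay of that position will then pin the expectation at a constant independent of $n$.

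First I would introduce the random variable $D$ giving the index of the first position at which $w$ and $x$ disagree, so that $1 \le D \le n$ since the words are distinct. For a uniformly chosen pair of \emph{unequal} words the event $D \ge i$ is exactly the event that $w$ and $x$ agree on their first $i-1$ symbols, and counting over the $k^{n}(k^{n}-1)$ ordered unequal pairs gives
\[
\Pr[D \ge i] \;=\; \frac{k^{\,n-i+1}-1}{k^{\,n}-1} \;\le\; 2\,k^{-(i-1)} \qquad (k \ge 2).
\]
Thus $D$ is dominated by a geometric random variable, and $E[D] = \sum_{i \ge 1}\Pr[D \ge i] \le \tfrac{2k}{k-1} \le 4$.

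Next I would prove the per-pair bound: if $w$ and $x$ first disagree at position $i$, with $w_i = a \ne b = x_i$, then $\sep(w,x) \le i+2$. The witnessing DFA walks along a path $q_0,\dots,q_{i-1}$, advancing on every input letter, branches at $q_{i-1}$ according to whether the $i$th letter equals $a$, and then routes the computation to an accepting sink or a rejecting sink, each self-looping on all of $\Sigma$. This machine accepts precisely those words whose $i$th symbol is $a$, hence accepts $w$ and rejects $x$. Since $\sep(w,x) \le D+2$ holds pointwise, linearity of expectation yields $E[\sep(w,x)] \le E[D] + 2 = O(1)$, uniformly in $n$.

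The only technical point needing care is the conditioning on $w \ne x$: the count must run over unequal pairs, which is what produces the $k^{n}-1$ denominator and the harmless factor of $2$ above. The conceptual heart of the argument, rather than any calculation, is the observation that we never need an \emph{efficient} automaton for the rare hard instances—a crude separator of size linear in $D$ is already overwhelmed by the geometric tail of $D$, so the difficulty of those instances contributes only a bounded amount to the average.
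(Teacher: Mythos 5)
Your proof is correct and takes essentially the same approach as the paper's: let $D$ be the position of the first disagreement, separate with a prefix-branching DFA of $D+2$ states, and sum the geometric tail of $D$ to bound the expectation by a constant. Your only refinement is handling the conditioning on $w \ne x$ rigorously (the denominator $k^n-1$ and the factor of $2$), where the paper simply computes $\sum_{i \geq 1} (i+2)(1/k)^{i-1}(1-1/k) \leq 4$ as if the words were unconditioned.
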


\begin{proof}
With probability $1-1/k$, two randomly-chosen words will differ in the
first position, which can be detected by an automaton with $3$ states.
With probability $(1/k)(1-1/k)$ the words will agree in the
first position, but differ in the second, etc.  Hence the expected
number of states needed to distinguish two randomly-chosen words
is bounded by $\sum_{i \geq 1} (i+2)(1/k)^{i-1} (1-1/k) = 
(3k-2)/(k-1) \leq 4$.
\endpf
\end{proof}

\section{Lower bounds for words of equal length}

First of all, there is a lower bound analogous to that in
Proposition~\ref{one} for words of {\it equal} length.  This does not
appear to have been known previously.

\begin{theorem}
No DFA of at most $n$ states can separate the equal-length
binary words $w = 0^{n-1} 1^{n-1+\lcm(1,2,\ldots, n)}$ and
$x = 0^{n-1+\lcm(1,2,\ldots,n)} 1^{n-1}$.
\label{equal}
\end{theorem}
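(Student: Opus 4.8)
The plan is to mirror, in two phases, the lower-bound argument used in Proposition~\ref{one}. Write $L = \lcm(1,2,\ldots,n)$ and let $M = (Q,\Sigma,\delta,q_0,F)$ be an arbitrary DFA with $|Q| \leq n$. I want to show that $M$ ends in exactly the same state after reading $w = 0^{n-1} 1^{n-1+L}$ and after reading $x = 0^{n-1+L} 1^{n-1}$; since a single state is either accepting or rejecting, this forces $M$ to accept both or reject both, so $M$ cannot separate them. As $M$ is arbitrary, the theorem follows. (One should first note in passing that $|w| = |x| = 2n-2+L$, so these are genuinely equal-length words.)

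For the first phase I would track the $0$-prefix. Set $a_i = \delta(q_0, 0^i)$, so that $a_{i+1} = \delta(a_i, 0)$ and the sequence $(a_i)_{i \geq 0}$ lives in a set of size at most $n$. By the usual pigeonhole argument it is eventually periodic: writing $\mu$ for its preperiod and $\lambda$ for its period, the states $a_0, a_1, \ldots, a_{\mu+\lambda-1}$ are distinct, so $\mu + \lambda \leq n$, giving $\mu \leq n-1$ and $\lambda \leq n$. Because $\lambda \leq n$, the period $\lambda$ divides $L = \lcm(1,2,\ldots,n)$, and hence $a_i = a_{i+L}$ for every $i \geq \mu$. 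Taking $i = n-1 \geq \mu$ yields $\delta(q_0, 0^{n-1}) = \delta(q_0, 0^{n-1+L})$. Call this common state $r$: both $w$ and $x$ sit in the very same state $r$ once their leading blocks of $0$'s have been consumed.

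The second phase is identical in spirit, now reading $1$'s starting from $r$. Let $b_j = \delta(r, 1^j)$; the same periodicity argument gives a preperiod at most $n-1$ and a period at most $n$ dividing $L$, whence $\delta(r, 1^{n-1}) = \delta(r, 1^{n-1+L})$. Since from $r$ the word $w$ reads $1^{n-1+L}$ and the word $x$ reads $1^{n-1}$, their final states coincide, completing the argument.

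The point requiring the most care is not any single estimate but the orchestration of the two phases: a \emph{single} lcm simultaneously absorbs the period of the $0$-loop and the period of the $1$-loop, precisely because each of those periods is at most $n$ and therefore divides $L$. The two phases must also chain correctly — one has to verify that $w$ and $x$ reach the \emph{exact same} state $r$ after the $0$-block (not merely states with the same acceptance value), since the subsequent $1$-computation is then run from a common starting point. Once that common intermediate state is established, the bound $\mu \leq n-1$ on both preperiods guarantees that the relevant indices $n-1$ already lie in the periodic part, and the conclusion drops out.
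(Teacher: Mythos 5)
Your proof is correct and takes essentially the same route as the paper's: the paper likewise applies the ``ultimately periodic with preperiod at most $n-1$ and period at most $n$'' argument to an arbitrary state and letter, first to the $0$-blocks from $q_0$ and then to the $1$-blocks from the common intermediate state. Your write-up merely makes explicit the details the paper leaves implicit (the pigeonhole bound $\mu + \lambda \leq n$, divisibility of $\lcm(1,2,\ldots,n)$ by the period, and the chaining through the common state $r$).
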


\begin{proof}
In pictures, we have
\begin{figure}[H]
\begin{center}
\epsfig{file=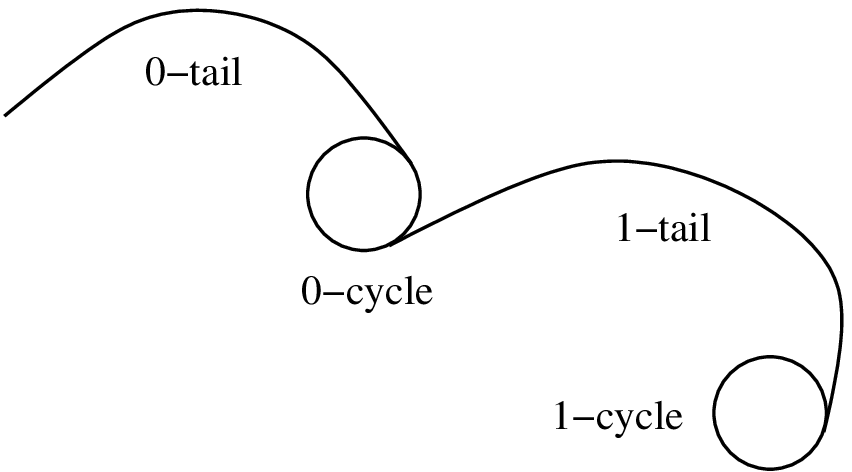,height=3cm}
\end{center}
\label{data135}
\end{figure}

More formally, 
let $M$ be any DFA with $n$ states, let $q$ be any state,
and let $a$ be any letter.  
Let $p_i = \delta(q, a^i)$ for $i \geq 0$.  Then $p_i$ is
ultimately periodic with period $\leq n$ and preperiod (``tail'') at most $n-1$.
Thus $p_{n-1} = p_{n-1+\lcm(1,2,\ldots, n)}$.

It follows that after processing $0^{n-1}$ and
$0^{n-1+\lcm(1,2,\ldots,n)}$, $M$ must be in the same state.  Similarly,
after processing
$0^{n-1} 1^{n-1+\lcm(1,2,\ldots, n)}$ and
$0^{n-1+\lcm(1,2,\ldots,n)} 1^{n-1}$, $M$ must be in the same state.
So no $n$-state machine can separate $w$ from $x$.
\endpf
\end{proof}

We now prove a series of very simple results showing that if $w$ and
$x$ differ in some ``easy-to-detect'' way, then $\sep(w,x)$ is small.

\subsection{Differences Near the Beginning or End of Words}

\begin{proposition}
Suppose $w$ and $x$ are words that
differ in some symbol that occurs
$d$ positions from the start.
Then $\sep(w,x) \leq d+2$.
\end{proposition}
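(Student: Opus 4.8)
The plan is to exhibit, for a pair $(w,x)$ differing at the symbol in position $d$, an explicit DFA with $d+2$ states that accepts $w$ and rejects $x$; since $\sep(w,x)=\sep(x,w)$, it does not matter which of the two we choose to accept. Write $a$ for the $d$-th symbol of $w$ and $b$ for the $d$-th symbol of $x$, so that $a\neq b$ while the two words agree in every position before $d$. The key observation is that the machine need not inspect the common prefix at all: it only has to count off $d-1$ symbols, examine the $d$-th, and then lock in its decision.

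Concretely, I would take states $q_0,q_1,\ldots,q_{d-1}$ together with an accepting sink $q_A$ and a rejecting sink $q_R$, with start state $q_0$ and $F=\{q_A\}$. The first states form a chain that advances on every input letter, $\delta(q_i,c)=q_{i+1}$ for $0\le i\le d-2$ and all $c\in\Sigma$, so that reading any word of length $\geq d-1$ brings the machine to $q_{d-1}$. At $q_{d-1}$ the machine finally branches on the content of the $d$-th symbol: set $\delta(q_{d-1},a)=q_A$ and $\delta(q_{d-1},c)=q_R$ for every $c\neq a$. Finally, make $q_A$ and $q_R$ self-loop on all letters, so that whatever follows position $d$ is ignored.

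The verification is then immediate. Running the machine on $w$, the common prefix drives it to $q_{d-1}$ after $d-1$ steps; reading the $d$-th symbol $a$ sends it to $q_A$, where it remains, so $w$ is accepted. Running it on $x$, the identical prefix again leads to $q_{d-1}$, but now the $d$-th symbol is $b\neq a$, which sends the machine to $q_R$, where it stays, so $x$ is rejected. Counting states gives $|\{q_0,\ldots,q_{d-1}\}|+2=d+2$, establishing $\sep(w,x)\le d+2$.

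There is no real obstacle here; the only points that need care are bookkeeping rather than ideas. First, one must confirm that the transition function is complete, which is why the two sinks absorb all letters and the chain advances on every letter. Second, one should check that a dedicated reject sink is genuinely required: if the ``wrong'' symbol at $q_{d-1}$ were instead routed back into the counting chain, or made to self-loop at $q_{d-1}$, then a later occurrence of $a$ in $x$ could spuriously flip the decision to acceptance, so the separate absorbing state $q_R$ is what makes the remaining suffix harmless. Finally, I would double-check the indexing convention, so that a difference at the $d$-th symbol uses exactly $d$ chain states and hence $d+2$ states in all.
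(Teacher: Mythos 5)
Your proof is correct and takes essentially the same approach as the paper: where the paper builds a $(d+2)$-state DFA recognizing $t\Sigma^*$ for the length-$d$ prefix $t$ of $w$ (with $d+1$ prefix states plus a dead state), you build a $(d+2)$-state DFA recognizing $\Sigma^{d-1}a\Sigma^*$ (a $d$-state counting chain plus two sinks), checking only the differing symbol rather than the whole prefix. The state count and the underlying idea --- hard-wire a decision at the position of the known difference --- are identical, so this is just a cosmetic variant of the paper's construction.
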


\begin{proof}
Let $t$ be a prefix of length $d$ of $w$.  Then $t$ is not a prefix of
$x$.  We can accept the language $t\Sigma^*$ using $d+2$ states; such
an automaton accepts $w$ and rejects $x$.
\endpf
\end{proof}

For example, to separate 
$$01010011101100110000$$
from
$$01001111101011100101$$
we can build a DFA to recognize words that begin with $0101$:

\begin{figure}[H] 
\begin{center} 
\input x0101.pstex_t
\end{center}
\label{x0101}
\end{figure}
\noindent (Transitions to a dead state are omitted.)

\begin{proposition}
Suppose $w$ and $x$ differ in some symbol that occurs $d$ positions from the
end.  Then $\sep(w,x) \leq d+1$.
\end{proposition}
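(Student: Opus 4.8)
The plan is to exploit the fact that $w$ and $x$, differing at the symbol $d$ positions from the end, must have different suffixes of length $d$. Writing $s$ for the length-$d$ suffix of $w$, the word $x$ has a different length-$d$ suffix, since the two already disagree in its leading symbol (the one $d$ positions from the end); hence $s$ is a suffix of $w$ but not of $x$, regardless of whether $|w|=|x|$. It therefore suffices to build a DFA accepting exactly the words that end in $s$, i.e.\ the language $\Sigma^* s$: such a machine accepts $w$ and rejects $x$, and all that remains is to realize it with $d+1$ states.

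First I would construct the standard string-matching (Knuth--Morris--Pratt / Aho--Corasick) automaton for the single pattern $s$. Its states are $0,1,\ldots,d$, where state $q$ records the length of the longest suffix of the input read so far that is a prefix of $s$; the start state is $0$ and the unique accepting state is $d$. On reading a symbol $c$ in state $q$, the automaton moves to the length of the longest prefix of $s$ that is also a suffix of the length-$q$ prefix of $s$ followed by $c$, a quantity always lying in $\{0,\ldots,d\}$. This makes the transition function total, so the automaton is complete, has exactly $d+1$ states, and accepts a word precisely when that word ends in $s$. This yields $\sep(w,x)\leq d+1$.

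The step I expect to carry the argument is the contrast with the prefix case. A naive machine that literally remembers the last $d$ symbols would need exponentially many states, so the entire saving comes from the failure-function fallback: after a mismatch the automaton does not die but slides to the longest still-viable partial match. This is exactly why no dead state is required, and why the suffix version needs only $d+1$ states rather than the $d+2$ of the preceding proposition, where a sink state had to be spent on mismatches. The one point to verify carefully is that this failure-based transition function genuinely recognizes $\Sigma^* s$ --- equivalently, that state $d$ is entered exactly when the text read so far ends in $s$ --- which is the classical correctness statement for the matching automaton.
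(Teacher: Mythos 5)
Your proof is correct and takes essentially the same approach as the paper: the paper likewise builds the standard pattern-matching automaton for the length-$d$ suffix $s$ of $w$, whose $d+1$ states correspond to the prefixes of $s$, with $\delta(t,a)$ equal to the longest suffix of $ta$ that is a prefix of $s$, so that it recognizes $\Sigma^* s$, accepting $w$ and rejecting $x$. Your additional verification that the failure-function transitions make the automaton complete (so no dead state is needed, unlike the $d+2$-state prefix construction) is exactly the implicit content of the paper's one-line description.
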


\begin{proof}
Let the DFA $M$ be the usual
pattern-recognizing automaton for the length-$d$ suffix $s$ of $w$,
ending in an accepting state if the suffix is recognized.
Then $M$ accepts $w$ but rejects $x$.    States of $M$ correspond to
prefixes of $s$, and $\delta(t, a) = $ the longest suffix of $ta$ that
is a prefix of $s$.
\endpf
\end{proof}

For example, to separate 
$$ 11111010011001010101$$
from
$$ 11111011010010101101$$
we can build a DFA to recognize those words that end in $0101$:

\begin{figure}[H] 
\begin{center} 
\input suff0101.pstex_t
\end{center}
\label{suff0101}
\end{figure}

\subsection{Fingerprints}

Define $|w|_a$ as the number of occurrences of the symbol $a$ in the word
$w$.

\begin{proposition}
If $|w|, |x| \leq n$
and $|w|_a \not= |x|_a$ for some symbol $a$, then $\sep(w,x) = O(\log n)$.
\end{proposition}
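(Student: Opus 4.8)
The plan is to mimic the proof of Proposition~\ref{one}, but with the number of occurrences of $a$ playing the role that the total length played there. First I would observe that, since $|w|, |x| \leq n$, the counts $|w|_a$ and $|x|_a$ both lie in the range $\{0, 1, \ldots, n\}$, and by hypothesis they are distinct. Applying Lemma~\ref{pnt} with $i = |w|_a$ and $j = |x|_a$ then produces a prime $p \leq 4.4 \log n$ such that $|w|_a \not\equiv |x|_a \pmod{p}$.

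Next I would build a DFA that tracks the number of $a$'s read so far, modulo $p$. Take $Q = \{0, 1, \ldots, p-1\}$ with initial state $0$; on reading the symbol $a$, advance from state $i$ to state $(i+1) \bmod p$, and on reading any other symbol, remain in the current state. A straightforward induction shows that after processing a word $v$ this automaton is in state $|v|_a \bmod p$. Choosing the set of final states to be $F = \{\, |w|_a \bmod p \,\}$, the machine accepts $w$ and rejects $x$, because their residues modulo $p$ differ. This DFA uses exactly $p = O(\log n)$ states, so $\sep(w,x) = O(\log n)$.

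I do not expect any genuine obstacle here: the construction is the natural counting cycle already employed in Proposition~\ref{one}, now counting a single letter rather than the total length. The only point requiring a moment's care is confirming that $|w|_a$ and $|x|_a$ fall within the range $\{0,\ldots,n\}$ to which Lemma~\ref{pnt} applies, and this is immediate from $|w|, |x| \leq n$.
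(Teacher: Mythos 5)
Your proposal is correct and matches the paper's proof, which likewise separates the words by counting occurrences of $a$ modulo a prime $p = O(\log n)$ whose existence is guaranteed by the prime number theorem (equivalently, Lemma~\ref{pnt}). Your version simply spells out the cycle automaton and the appeal to Lemma~\ref{pnt} more explicitly than the paper does.
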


\begin{proof}
By the prime number theorem, if $|w|, |x| = n$, and $w$ and $x$ have
$k$ and $m$ occurrences of $a$ respectively ($k \not= m$),
then there is a prime $p = O(\log n)$ such that $k \not\equiv m$ (mod $p$).
So we can separate $w$ from $x$ just by counting the number of $a$'s,
modulo $p$.
\endpf
\end{proof}

Analogously, we have the following result.

\begin{proposition}
If there is a pattern of length $d$ occurring a differing number of
times in $w$ and $x$, with $|w|, |x| \leq n$, then 
$\sep(w,x) = O(d \log n)$.
\end{proposition}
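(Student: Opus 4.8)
The plan is to mimic the proof of the preceding proposition, replacing the ``count a single symbol modulo a small prime'' automaton with one that counts occurrences of the length-$d$ pattern modulo a small prime. Let the pattern be a fixed word $u$ with $|u| = d$, and let $k$ and $m$ denote the number of occurrences of $u$ in $w$ and $x$ respectively, so that $k \neq m$ by hypothesis. Since $w$ and $x$ have length at most $n$, both counts lie in $[0, n]$, so by Lemma~\ref{pnt} there is a prime $p \leq 4.4 \log n$ with $k \not\equiv \fmod{m} {p}$. It then suffices to build a DFA with $O(d \log n)$ states that computes the number of occurrences of $u$ modulo $p$ and accepts exactly when this residue equals $k \bmod p$; such a machine accepts $w$ and rejects $x$.

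To build this DFA, first I would take the standard pattern-matching automaton $A$ for $u$. Its states are the $d+1$ prefixes of $u$; the state after reading an input is the longest suffix of the input read so far that is also a prefix of $u$, and $\delta_A(t, a)$ is the longest suffix of $ta$ that is a prefix of $u$. This is exactly the construction already used for the length-$d$ suffix in the proposition above, so its correctness and its $O(d)$ size are standard. The full pattern $u$ is the unique ``match'' state, and entering it signals one more occurrence of $u$.

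Next I would form the product of $A$ with a cyclic counter over $\Zee/p\Zee$. A state of the product is a pair $(t, c)$ with $t$ a state of $A$ and $c \in \{0, 1, \ldots, p-1\}$; on reading a symbol $a$ we move to $(\delta_A(t,a), c')$, where $c' = (c+1) \bmod p$ if $\delta_A(t,a)$ is the match state and $c' = c$ otherwise. Thus $c$ tracks the running number of occurrences of $u$ modulo $p$. Declaring the accepting states to be those pairs whose second coordinate equals $k \bmod p$ yields a DFA with $(d+1)p = O(d \log n)$ states that separates $w$ from $x$.

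The one point requiring care is the exact notion of ``occurrence'' together with the behavior of $A$ at the match state: since the match state is left according to the failure function, overlapping occurrences must be handled consistently, and the product's increment must fire precisely once per entry into the match state. Fixing a single convention for counting occurrences makes this routine, because the same automaton then realizes that convention for both $w$ and $x$; all the argument actually needs is that the two resulting counts differ, which is the hypothesis.
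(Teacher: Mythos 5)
Your proposal is correct and is exactly the argument the paper intends: the paper gives no explicit proof (it merely says the result follows ``analogously''), the intended analogy being precisely your construction --- a product of the standard $(d+1)$-state pattern-matching automaton for $u$ with a mod-$p$ counter, where $p = O(\log n)$ is chosen via Lemma~\ref{pnt} to distinguish the two occurrence counts (both at most $n$), giving $(d+1)p = O(d \log n)$ states. Your closing remark about fixing one consistent convention for overlapping occurrences is the right way to dispose of the only subtle point, so nothing is missing.
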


\subsection{Pairs with Low Hamming Distance}

The previous results have shown that if $w$ and $x$ have differing
``fingerprints'', then they are easy to separate.
By contrast, the next result shows that if $w$ and $x$ are very
similar, then they are also easy to separate.

The \textit{Hamming distance} $H(w,x)$ between two equal-length
words $w$ and $x$ is defined to be the number of
positions where they differ.  

\begin{theorem}
Let $w$ and $x$ be words of length $n$.
If $H(w,x) \leq d$, then $\sep(w,x) = O(d \log n)$.
\label{hamming}
\end{theorem}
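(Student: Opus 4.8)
The plan is to reduce to a binary alphabet and then separate the words with a small automaton that checks the parity of the $1$'s lying in a single residue class of positions. To reduce to binary, fix a position $i_1$ where $w$ and $x$ differ and apply the letter map $h$ sending $w_{i_1}$ to $1$ and every other letter to $0$. The images $u=h(w)$ and $v=h(x)$ are binary words of length $n$; since $h(w_i)=h(x_i)$ whenever $w_i=x_i$, they differ only among the (at most $d$) positions where $w$ and $x$ differ, and they genuinely differ at $i_1$, so $1\le H(u,v)\le d$. Exactly as in the proof of alphabet-independence, any DFA separating $u$ from $v$ yields one with the same number of states separating $w$ from $x$, by relabelling each $1$-transition as a transition on $w_{i_1}$ and each $0$-transition as transitions on the remaining letters. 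Thus it suffices to separate binary $u,v$ of length $n$ with $1\le H(u,v)\le d$.

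Let $i_1<\cdots<i_k$, $k\le d$, be the positions where $u$ and $v$ differ; at each of them exactly one of the two words carries a $1$. For a prime $p$ and residue $r$, I would use the DFA whose $2p$ states record the pair (current position $\bmod\ p$, parity of the number of $1$'s read so far in positions $\equiv r \pmod p$), accepting according to a fixed target parity. Positions where $u$ and $v$ agree contribute equally to both parities, so this automaton separates $u$ from $v$ exactly when the number of differing positions $i_j\equiv r\pmod p$ is odd. Hence the whole problem reduces to exhibiting a prime $p=O(d\log n)$ and a residue $r$ for which the class $r$ modulo $p$ contains an odd number of the $i_j$; the resulting DFA then has $O(d\log n)$ states.

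The heart of the argument --- and the step I expect to be the main obstacle --- is producing such a small prime. A clean sufficient condition is that $i_1$ be \emph{isolated} modulo $p$, i.e. $p\nmid(i_1-i_j)$ for every $j\ge2$: then the class of $i_1$ contains exactly the one differing position $i_1$, an odd number, and the automaton above separates $u$ from $v$. A prime is bad for this purpose only if it divides the nonzero integer $N=\prod_{j\ge2}(i_1-i_j)$; since $|N|\le n^{d-1}$ and a positive integer $N$ has at most $\log_2 N$ distinct prime divisors, there are only $O(d\log n)$ bad primes, whence a good prime of size $O(d\log n)$ exists by the prime number theorem, in the spirit of Lemma~\ref{pnt}. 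The delicate point is precisely this count: isolating a single differing position keeps the bound linear in $d$, whereas demanding that all $k$ positions be distinct modulo $p$ would introduce $\binom{k}{2}$ pairwise differences and push $p$ up to $O(d^2\log n)$. Assembling the binary reduction, the $2p$-state parity automaton, and this choice of $p$ then gives $\sep(w,x)=O(d\log n)$.
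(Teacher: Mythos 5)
Your proposal is correct and follows essentially the same route as the paper's proof: choose a prime $p = O(d\log n)$ not dividing $\prod_{j\ge 2}(i_1 - i_j)$ so that $i_1$ is isolated in its residue class modulo $p$, then separate with the $2p$-state automaton tracking position mod $p$ and the parity of $1$'s in that class. Your only addition is making the reduction to a binary alphabet explicit (the paper simply asserts it without loss of generality), which is a fine but inessential elaboration.
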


\begin{proof}
Without loss of generality, assume $x$ and $y$ are binary words, and
$x$ has a $1$ in some position where $y$ has a $0$.
Consider the following picture:

\begin{figure}[H] 
\begin{center} 
\input sep1.pstex_t
\end{center}
\label{double1}
\end{figure}

Let $i_1 < i_2 < \ldots < i_d$ be the positions where $x$ and $y$ differ.
Now consider $N = (i_2 - i_1)(i_3 - i_1) \cdots (i_d - i_1)$.
Then $N < n^{d-1}$.
By the prime number theorem, there exists some prime $p = O(\log N) = 
O(d \log n)$ such that $N$ is not divisible by $p$.
So $i_j \not\equiv \fmod{i_1} {p}$ for $2 \leq j \leq d$.  

Define $a_{p,k}(x)$ = $\left( \displaystyle\sum_{j \equiv \fmod{k} {p}}
  x_j \right) \bmod 2$.  This value can be calculated by a DFA
consisting of two connected rings of $p$ states each.
We use such a DFA calculating $a_{p,i_1}$.  Since $p$ is not a factor
of $N$, none of the positions $i_2, i_3, \ldots, i_d$ are included in
the count $a_{p,i_1}$, and the two words $x$ and $y$ agree in all
other positions.  So $x$ contains exactly one more $1$ in these
positions than $y$ does, and hence we can separate the two words using
$O(d \log n)$ states.
\endpf
\end{proof}

\section{Special classes of words}

\subsection{Reversals}

It is natural to think that pairs of words that are related might be
easier to separate than arbitrary words; for example, it might be easy
to separate a word from its reversal.  No better upper bound is known
for this special case.  However, we still have a lower bound of
$\Omega(\log n)$ for this restricted problem:

\begin{proposition}
There exists a class of words $w$ for which $\sep(w, w^R) = 
\Omega(\log n)$ where $n = |w|$.
\end{proposition}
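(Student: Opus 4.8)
The plan is to produce an explicit infinite family of words $w$ for which $w$ and its reversal $w^R$ differ only in length-dependent ways that force any small DFA to fail, and then invoke the lower-bound machinery already developed. The cleanest route is to reduce to the situation of Proposition~\ref{one} or Theorem~\ref{equal}: I want words whose reversals differ from them in a manner detectable only by ``counting modulo'' some quantity that grows like $\lcm(1,2,\ldots,n)$, so that the prime number theorem yields the $\Omega(\log n)$ bound.

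First I would try the family $w = 0^a 1^b$ with $a \neq b$, so that $w^R = 1^b 0^a$. Here $w$ begins with a block of $0$'s and $w^R$ begins with a block of $1$'s, so they actually differ in the very first symbol --- which by the earlier propositions makes them \emph{easy} to separate, giving only $O(1)$ states. So this naive choice fails, and the real content of the proposition is that we must choose $w$ so that $w$ and $w^R$ agree on a long prefix (and suffix) and differ only deep inside. The natural fix is to build $w$ palindromically on the outside but with a controlled asymmetry in the middle: take something like $w = 0^{n-1} 1^{\,n-1+\lcm(1,2,\ldots,n)} 0^{n-1}$ reflected against a shifted version, or more simply to exploit that the hard instances of Theorem~\ref{equal} are themselves nearly reversal-related. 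Concretely I would set $u = 0^{m} 1^{m + L}$ where $L = \lcm(1,2,\ldots,m)$ and consider $w = u\,u^R$ or a similar concatenation that is \emph{not} a palindrome but whose deviation from its reversal is governed by $L$.

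The key steps, in order, are: (1) exhibit the explicit word $w$ of length $n = \Theta(m + L)$ together with $w^R$; (2) verify that $w \neq w^R$, i.e.\ that the construction is genuinely non-palindromic; (3) show that any DFA with at most $c \log n$ states, when run on $w$ versus $w^R$, is forced into the same state at the point of divergence, using the ultimate-periodicity argument of Theorem~\ref{equal} --- namely that $\delta(q, a^i)$ is eventually periodic with period at most the number of states, so $\delta(q,a^{i}) = \delta(q,a^{i+L})$ once $i$ exceeds the preperiod; and (4) conclude that the automaton cannot separate $w$ from $w^R$, whence $\sep(w,w^R) = \Omega(\log n)$ since $L = e^{m(1+o(1))}$ forces $\log n = \Theta(m)$ while the number of available states is too small.

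The main obstacle will be step~(2) combined with controlling the reversal's internal structure: because reversal interacts non-trivially with concatenation ($(uv)^R = v^R u^R$), I must choose the building blocks so that the reversed word differs from the original \emph{only} in the length of some internal run, not at the ends. In other words, the hard part is arranging that $w$ and $w^R$ share a common prefix and suffix long enough to defeat the ``difference near the beginning/end'' propositions, while the surviving middle discrepancy is exactly a length difference of the form $L = \lcm(1,\ldots,m)$ that the periodicity argument can exploit. Once the right palindromically-padded template is found, the lower bound itself follows immediately by replaying the argument of Theorem~\ref{equal}.
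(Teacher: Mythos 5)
Your machinery is the right one --- ultimate periodicity of $\delta(q,0^i)$ combined with the growth rate of $\lcm(1,2,\ldots,t)$, exactly as in Theorem~\ref{equal} --- but the proposal never actually produces a word, and the one concrete candidate you write down fails: for any $u$, the word $w = u\,u^R$ satisfies $(u\,u^R)^R = u\,u^R$, i.e.\ it \emph{is} a palindrome, so $w = w^R$ and there is nothing to separate. The same problem kills the variant $0^{m} 1^{m+L} 0^{m}$, which is also a palindrome. You correctly flag step~(2), non-palindromicity, as the main obstacle, but you leave it unresolved (``once the right template is found, the lower bound follows''), and since exhibiting such a word is the entire content of the proposition, this is a genuine gap rather than a deferred routine verification.

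The fix is a one-symbol asymmetry, and it instantiates precisely your ``palindromically padded with an asymmetric middle'' template: take $w = 0^{t-1} \, 1 \, 0^{t-1+\lcm(1,2,\ldots,t)}$, so that $w^R = 0^{t-1+\lcm(1,2,\ldots,t)} \, 1 \, 0^{t-1}$. The single $1$ acts as a marker flanked by two $0$-runs of different lengths, and reversal swaps the run lengths, so $w \neq w^R$. For any DFA with at most $t$ states and \emph{any} state $q$, the sequence $\delta(q, 0^i)$ has preperiod at most $t-1$ and period at most $t$, hence dividing $\lcm(1,2,\ldots,t)$; therefore $\delta(q, 0^{t-1}) = \delta(q, 0^{t-1+\lcm(1,2,\ldots,t)})$. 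Applying this once before the $1$ and once after it shows the machine reaches the same final state on $w$ and on $w^R$, so $\sep(w,w^R) > t$, and since $n = |w| = 2t-1+\lcm(1,2,\ldots,t) = e^{t(1+o(1))}$ this gives $t = \Omega(\log n)$. Note also that your worry about defeating the ``difference near the beginning/end'' propositions resolves itself: the first and last disagreements between $w$ and $w^R$ sit at distance $\Theta(\log n)$ from the word boundaries, which is exactly deep enough to be consistent with an $\Omega(\log n)$ bound --- the differences need not be buried in the middle of the word, only $\Omega(\log n)$ symbols from each end.
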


\begin{proof}
Consider separating 
$$w = 0^{t-1} 1 0^{t-1+\lcm(1,2, \ldots t)}$$
from
$$w^R = 0^{t-1+\lcm(1,2, \ldots t)} 1 0^{t-1}.$$
Then, as before,
no DFA with $\leq t$ states can separate $w$ from $w^R$.
\endpf
\end{proof}

Must $\sep(w^R, x^R) = \sep(w,x)$?
No, for $w = 1000$, $x = 0010$, we have
$$\sep(w,x) = 3$$
but $$\sep(w^R, x^R) = 2.$$

\begin{openproblem}
Is $ \left| \sep(x,w) - \sep(x^R,w^R) \right|$ unbounded?
\end{openproblem}

\subsection{Conjugates}

Two words $w, w'$ are \textit{conjugates} if one is a cyclic shift of the
other.
For example, the English words \texttt{enlist} and \texttt{listen} are
conjugates.
Is the separating words problem any easier if restricted to
pairs of conjugates?

\begin{proposition}
There exist a infinite class of pairs of words $w, x$ such that
$w, x$ are conjugates, and $\sep(w,x)= \Omega(\log n)$ for
$|w|=|x|=n$.
\end{proposition}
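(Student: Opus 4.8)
The plan is to recycle the construction from the reversals proposition, observing that the pair used there already consists of conjugates. Concretely, set $L = \lcm(1,2,\ldots,t)$ and take
$$ w = 0^{t-1} 1 0^{t-1+L}, \qquad x = 0^{t-1+L} 1 0^{t-1}. $$
The first thing I would check is that $w$ and $x$ are indeed conjugates. This follows from a general fact: any two binary words that each contain exactly one $1$ and have the same length are cyclic shifts of one another. Indeed, $0^a 1 0^b$ and $0^c 1 0^d$ with $a+b=c+d$ are related by rotating the leading block of $0$'s to the end: moving the first $a-c$ zeros to the tail of $0^a 1 0^b$ produces $0^c 1 0^{b+(a-c)} = 0^c 1 0^d$. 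Here both $w$ and $x$ have a single $1$ and $2(t-1)+L$ zeros, so they are conjugate. (In fact $x = w^R$, which is exactly why the reversals example transfers to this setting.)

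Next I would establish the lower bound: no DFA $M$ with at most $t$ states can separate $w$ from $x$. The tool is the ultimate-periodicity argument already used in Theorem~\ref{equal}. For any state $q$, the sequence $\delta(q,0^i)$ has period dividing $L$ and preperiod at most $t-1$, so $\delta(q,0^{t-1}) = \delta(q,0^{t-1+L})$. Applying this from $q_0$, after reading the leading $0$-blocks of $w$ and of $x$ the machine is in the same state $s = \delta(q_0,0^{t-1})$; reading the single $1$ then takes both computations to the common state $s' = \delta(s,1)$. Finally I apply the periodicity a second time, now starting from $s'$: since $w$ reads $0^{t-1+L}$ and $x$ reads $0^{t-1}$ after the $1$, and $\delta(s',0^{t-1}) = \delta(s',0^{t-1+L})$, both computations terminate in the same state. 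Hence $M$ accepts $w$ if and only if it accepts $x$, so $\sep(w,x) > t$.

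To finish, I would convert this into the claimed asymptotic. The words have common length $n = 2t-1+L = 2t - 1 + \lcm(1,2,\ldots,t)$, and by the prime number theorem $\lcm(1,2,\ldots,t) = e^{t(1+o(1))}$, so $t = \Omega(\log n)$. Letting $t$ range over the positive integers yields an infinite family of conjugate pairs with $\sep(w,x) \geq t = \Omega(\log n)$, as desired.

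The only genuinely delicate point is the double application of the periodicity lemma: one must be careful that the two trailing $0$-runs differ by exactly a multiple of $L$ (they differ by $L$ itself), so that the second invocation applies to both computations simultaneously and lands them in a common final state. Everything else is bookkeeping — in particular, verifying the conjugacy is immediate once the ``single $1$'' normal form is isolated.
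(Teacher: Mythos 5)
Your proof is correct and takes essentially the same route as the paper: the paper also recycles the reversal construction, differing only cosmetically in that it appends a trailing $1$ to both words (using $w = 0^{t-1}\,1\,0^{t-1+\lcm(1,2,\ldots,t)}\,1$ and $w' = 0^{t-1+\lcm(1,2,\ldots,t)}\,1\,0^{t-1}\,1$) so that conjugacy is immediate as a swap of two blocks, whereas you keep the reversal pair itself and justify conjugacy via the single-$1$ observation. Your lower-bound argument --- two applications of the ultimate periodicity of $\delta(q,0^i)$, before and after the $1$ --- is exactly the argument the paper invokes (implicitly, via ``consider again''), so the two proofs are essentially identical.
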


\begin{proof}
Consider again
$$w = 0^{t-1} 1 0^{t-1+\lcm(1,2, \ldots t)} 1$$
and
$$w' = 0^{t-1+\lcm(1,2, \ldots t)} 1 0^{t-1} 1.$$
\endpf
\end{proof}

\section{Nondeterministic separation}

We can define $\nsep(w,x)$ in analogy with $\sep$:  the number of states
in the smallest NFA accepting $w$ but rejecting $x$.  There do not seem
to be any published results about this measure.

Now there is an asymmetry in the inputs:  $\nsep(w,x)$ need not
equal $\nsep(x,w)$.
For example, the following $2$-state NFA accepts $w = 000100$ and
rejects $x = 010000$, so $\nsep(w,x) \leq 2$.

\begin{figure}[H]
\begin{center}
\input nsep.pstex_t
\end{center}
\label{asymm}
\end{figure}

However, an easy computation shows that
there is no $2$-state NFA accepting $x$ and rejecting $w$,
so $\nsep(x,w) \geq 3$.

\begin{openproblem}
Is $|\nsep(x,w) - \nsep(w,x)|$ unbounded?
\end{openproblem}

A natural question is whether NFAs give more separation power than
DFAs.  Indeed they do, since $\sep(0001,0111) = 3$ but
$\nsep(0001,0111) = 2.$  However, a more interesting question is the
\emph{extent} to which nondeterminism helps with separation --- for
example, whether it contributes only a constant factor or there is any
asymptotic improvement in the number of states required.

\begin{theorem}
The quantity $\sep(w,x)/\nsep(w,x)$ is unbounded.
\label{nondets}
\end{theorem}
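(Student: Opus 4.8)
The plan is to exhibit an explicit infinite family of word-pairs $(w_t, x_t)$ for which the nondeterministic separation cost stays bounded (ideally constant, or at worst $O(1)$) while the deterministic separation cost grows without bound. Since the ratio $\sep/\nsep$ is what we must make large, the cleanest route is to find pairs where $\nsep(w,x) = O(1)$ but $\sep(w,x) \to \infty$. The natural source of hard-to-separate-deterministically pairs is already in hand: the lower-bound construction from Theorem~\ref{equal} and Proposition~\ref{one}, built from $0^{n-1}$ versus $0^{n-1+\lcm(1,2,\ldots,n)}$, forces any DFA to have more than $n$ states because a short cycle of $\le n$ states gets trapped by the period dividing $\lcm(1,\ldots,n)$. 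So the deterministic side is essentially free; the real work is designing the pair so that a \emph{nondeterministic} guess collapses the cost.

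The key idea I would pursue: nondeterminism lets a machine \emph{guess} a single distinguishing position rather than count lengths modulo a prime. Concretely, I would look for pairs where $w$ contains a short pattern (say a single isolated $1$, or a factor like $101$) that simply does not occur in $x$ — or occurs in a way that an NFA can latch onto by guessing where to start reading. An NFA accepting $w$ can nondeterministically ``skip'' to the relevant spot, verify a constant-length witness, and accept, using only $O(1)$ states, while rejecting $x$ because that witness is absent. The design constraint is to simultaneously make the \emph{lengths and symbol counts} of $w$ and $x$ match closely enough (e.g. equal length, equal number of each symbol, equal Hamming-type fingerprints) that none of the cheap deterministic tricks from the preceding propositions — differing length prefixes, differing letter counts, differing pattern counts — apply, forcing any DFA into the expensive $\lcm$-style counting and hence $\omega(1)$ states. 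The construction from the reversals and conjugates subsections, using $0^{t-1}10^{t-1+\lcm(1,\ldots,t)}$-type words with a marker $1$, is exactly the flavor that keeps fingerprints balanced while retaining a large deterministic lower bound.

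The main steps, in order, would be: first, fix the candidate family and prove the \textbf{deterministic lower bound} $\sep(w_t, x_t) = \omega(1)$ (indeed $\Omega(\log |w_t|)$) by a pumping/ultimate-periodicity argument identical in structure to the proof of Theorem~\ref{equal} — a DFA with too few states cannot distinguish a block $0^{m}$ from $0^{m + \lcm(1,\ldots,k)}$, and this survives the surrounding context. Second, construct an explicit \textbf{small NFA} achieving $\nsep(w_t, x_t) = O(1)$: describe its states and transitions, argue it has an accepting run on $w_t$ by exhibiting the guesses, and argue it has \emph{no} accepting run on $x_t$ by checking that every nondeterministic branch fails to find the guessed witness. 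Third, combine the two bounds to conclude the ratio is unbounded.

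The hard part will be the \textbf{NFA rejection argument} — showing that \emph{no} nondeterministic run accepts $x_t$. Acceptance is easy (exhibit one good run), but rejection requires ruling out \emph{all} exponentially many computation paths, and it is precisely here that a poorly chosen witness pattern can leak: if the distinguishing factor happens to appear somewhere in $x_t$, some guess will succeed and the NFA will wrongly accept. So the crux is engineering $w_t$ and $x_t$ so that the guessable witness is genuinely unique to $w_t$ — for instance by placing the marker symbol at a position whose surrounding run-length structure cannot be matched anywhere in $x_t$ — and then proving that uniqueness rigorously against all guesses.
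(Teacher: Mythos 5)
Your plan has a fatal structural flaw: a family with $\nsep(w_t,x_t)=O(1)$ and $\sep(w_t,x_t)\to\infty$ cannot exist. By the subset construction, every $k$-state NFA is equivalent to a DFA with at most $2^k$ states, so $\sep(w,x)\le 2^{\nsep(w,x)}$ for \emph{every} pair; if the nondeterministic cost stays bounded along your family, the deterministic cost is bounded too, and the ratio never grows. Your specific witness mechanism fails independently of this: if a constant-length factor $u$ occurs in $w_t$ but not in $x_t$, then the ordinary deterministic pattern-matching automaton for $\Sigma^* u \Sigma^*$ (with $|u|$ plus a constant number of states) already separates the pair, so $\sep(w_t,x_t)=O(1)$ as well. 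Your two design constraints --- ``the guessable witness is genuinely unique to $w_t$'' and ``every cheap deterministic trick is blocked, forcing $\lcm$-style counting'' --- directly contradict each other, which is why the ``hard part'' you flag (the NFA rejection argument) cannot be engineered around: any pair hard for DFAs must have all its short factors, counts, and fingerprints matched, leaving no $O(1)$-verifiable witness for an NFA to guess.

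The correct shape of the argument, and what the paper does, is to let $\nsep$ grow but strictly more slowly than $\sep$. The paper keeps exactly your deterministic side: with $t=(n-1)(n-2)$, the pair $w=0^{t-1+\lcm(1,2,\ldots,t)}1^{t-1}$ and $x=0^{t-1}1^{t-1+\lcm(1,2,\ldots,t)}$ requires at least $t+1$ DFA states by Theorem~\ref{equal}. But the nondeterministic advantage is \emph{quantitative counting power}, not pattern guessing: an NFA with a nondeterministic choice between two cycles of coprime lengths $n$ and $n-1$ (hence $2n=\Theta(\sqrt{t})$ states) accepts $\lbrace 0^a : a \in A\rbrace 1^*$, where $A$ is the numerical semigroup generated by $n-1$ and $n$. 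The Frobenius number of this semigroup is $n^2-3n+1=t-1\notin A$, while every integer $\ge t$ is in $A$; hence the NFA accepts $w$ (whose $0$-block has length $t-1+\lcm(1,\ldots,t)\ge t$) and rejects $x$ (whose $0$-block has length exactly $t-1$). The rejection argument is trivial here because the accepted language is transparent, not because any witness is unique. This yields $\sep(w,x)/\nsep(w,x)=\Omega(\sqrt{t})$, which is unbounded --- consistent with, and essentially tight against, the $\sep\le 2^{\nsep}$ barrier that dooms your $O(1)$ approach.
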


\begin{proof}
Consider once again the words
$$ w =  0^{t-1+\lcm(1,2,\ldots,t)} 1^{t-1} \ \ \ {\rm and} \ \ \ 
 x = 0^{t-1} 1^{t-1+\lcm(1,2,\ldots,t)}  $$
where $t = n^2 - 3n + 2$, $n \geq 4$.  

We know from Theorem~\ref{equal}
that any DFA separating these words must have at
least $t+1 = n^2 - 3n+3$ states.

Now consider the following NFA $M$:

\begin{figure}[H]
\begin{center}
\input aut11.pstex_t
\end{center}
\label{data136}
\end{figure}

The language accepted by this NFA is $\lbrace 0^a \ : \ a \in A \rbrace 1^*$,
where $A$ is the set of all integers representable by a non-negative integer
linear combination of $n$ and $n-1$.
But  $t-1 = n^2 - 3n + 1 \not\in A$, as can be
seen by computing $t-1$ modulo $n-1$ and modulo $n$.
On the other hand, every integer $\geq t$ is in $A$.
Hence $w = 0^{t-1+\lcm(1,2,\ldots,t)} 1^{t-1}  $ is accepted by $M$ but 
$x = 0^{t-1} 1^{t-1+\lcm(1,2,\ldots,t)}$ is not.

Now $M$ has $2n = \Theta(\sqrt{t})$ states, so  $\sep(x,w) / \nsep(x,w) \geq
\sqrt{t} = \Omega(\sqrt{\log |x|})$, which is unbounded.
\endpf
\end{proof}

\begin{openproblem}
Find better bounds on $\sep(w,x)/\nsep(w,x)$.
\end{openproblem}

We can also get an $\Omega(\log n)$ lower bound for nondeterministic
separation.

\begin{theorem}
No NFA of $n$ states can separate
$$0^{n^2-1} 1^{n^2-1+\lcm(1,2,\ldots, n)}$$
from
$$ 0^{n^2-1+\lcm(1,2,\ldots, n)} 1^{n^2-1} .$$
\end{theorem}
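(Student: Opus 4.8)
The plan is to mimic the proof of Theorem~\ref{equal}, replacing the single ``state reached after reading $0^i$'' of a DFA by the full reachability \emph{relation} of the NFA, encoded as a Boolean matrix. Let $M = (Q,\Sigma,\delta,q_0,F)$ be any NFA with $|Q| = n$, and for each letter $a$ let $A_a$ be the $n \times n$ Boolean matrix with $A_a[q][q'] = 1$ exactly when $q' \in \delta(q,a)$. Under Boolean matrix multiplication, $(A_a)^i[q][q'] = 1$ iff there is a path from $q$ to $q'$ labelled $a^i$. Writing $A = A_0$, $B = A_1$, and $L = \lcm(1,2,\ldots,n)$, the NFA accepts a word $0^i 1^j$ precisely when
\[
\exists\, p \in Q,\ f \in F : \ A^i[q_0][p] = 1 \ \text{and}\ B^j[p][f] = 1 .
\]
Thus the entire behaviour of $M$ on words of the form $0^i 1^j$ is determined by the Boolean powers of $A$ and $B$, and it suffices to understand how these powers depend on the exponent.

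The key step is the following periodicity lemma: \emph{for every $n \times n$ Boolean matrix $C$ one has $C^{n^2-1} = C^{n^2-1+L}$.} This rests on two classical facts about the digraph whose adjacency matrix is $C$. First, the sequence of powers is eventually periodic with period dividing $L$: any walk of length $\ge n$ contains a repeated vertex and hence a simple cycle, whose length $c$ satisfies $1 \le c \le n$ and therefore $c \mid L$; inserting $L/c$ copies of that cycle shows $C^i \le C^{i+L}$ entrywise for $i \ge n$, and as there are only finitely many Boolean matrices the powers along each residue class modulo $L$ are nondecreasing and must stabilize, so the eventual period divides $L$. Second, the index of convergence (preperiod) of an $n \times n$ Boolean matrix is at most $(n-1)^2 + 1$, the Wielandt-type bound; since $(n-1)^2 + 1 \le n^2 - 1$ for $n \ge 2$, the exponent $n^2-1$ already lies in the periodic regime, giving $C^{n^2-1} = C^{n^2-1+L}$.

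Granting the lemma, I would finish exactly as in Theorem~\ref{equal}. Applying it to $A$ and to $B$ gives $A^{n^2-1} = A^{n^2-1+L}$ and $B^{n^2-1} = B^{n^2-1+L}$. Substituting these two equalities into the acceptance criterion above, the condition for $M$ to accept $0^{n^2-1} 1^{n^2-1+L}$,
\[
\exists\, p \in Q,\ f \in F :\ A^{n^2-1}[q_0][p] = 1 \ \text{and}\ B^{n^2-1+L}[p][f] = 1,
\]
becomes \emph{literally the same} Boolean expression as the condition for $M$ to accept $0^{n^2-1+L} 1^{n^2-1}$. Hence $M$ accepts one of the two words iff it accepts the other, so no $n$-state NFA can separate them.

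The main obstacle is the periodicity lemma, and within it the preperiod (index) bound rather than the period. The period-dividing-$L$ direction only requires \emph{inserting} cycles, as above; the reverse inclusion $C^{i+L} \le C^{i}$ amounts to \emph{deleting} from a long walk a family of cycles of total length exactly $L$, which is exactly what the index-of-convergence estimate for Boolean matrices encapsulates. This is the one nontrivial ingredient I would need to cite, or to reprove carefully by a cycle-decomposition argument, since a naive greedy deletion of cycles need not hit a total length of exactly $L$.
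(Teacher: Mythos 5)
Your proposal is correct, but it takes a genuinely different route from the paper's. The paper's proof is a two-line reduction to known structure theory for unary NFAs: it invokes the Chrobak normal form \cite{Chrobak:1986}, as corrected by To \cite{To:2009}, with Geffert's bound of $n^2-2$ on the tail \cite{Geffert:2007} --- every unary $n$-state NFA is equivalent to one with a tail of at most $n^2-2$ states feeding cycles of length at most $n$ --- so each of the unary reachability languages $\{i : q \in \delta(q_0,0^i)\}$ and $\{j : \delta(q,1^j) \cap F \neq \emptyset\}$ is eventually periodic with preperiod at most $n^2-1$ and period dividing $\lcm(1,2,\ldots,n)$, after which the DFA argument of Theorem~\ref{equal} is repeated verbatim. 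You instead linearize the NFA into Boolean transition matrices and reduce everything to the single semigroup fact $C^{n^2-1} = C^{n^2-1+L}$ for every $n \times n$ Boolean matrix $C$, which you correctly split into an easy half (period divides $L$, via cycle insertion and monotone stabilization along residue classes mod $L$) and a hard half, the Wielandt-type bound $(n-1)^2+1$ on the index of convergence of an \emph{arbitrary} (not just primitive) Boolean matrix; that bound is indeed a classical theorem of Boolean matrix theory, and since $(n-1)^2+1 \leq n^2-1$ for $n \geq 2$, your substitution makes the two acceptance conditions literally identical, exactly as claimed. The two external ingredients are of comparable depth and are essentially interchangeable here --- your index bound $(n-1)^2+1$ plays precisely the role of Geffert's tail bound $n^2-2$ --- but your formulation buys uniformity (one lemma applied once to each letter, no case analysis over normal forms) and sidesteps the Chrobak normal form, whose original proof famously contained an error. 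Your closing diagnosis is also accurate: the genuinely nontrivial content, in both approaches, is deleting cycles of total length \emph{exactly} $L$ from a long walk, which is where the number-theoretic work (Frobenius/Schur-type arguments) lives, and citing it is as legitimate as the paper's citations. One pedantic remark applying equally to both proofs: the statement needs $n \geq 2$, since for $n=1$ the words degenerate to $1$ and $0$, which a one-state NFA does separate, and correspondingly your inequality $(n-1)^2+1 \leq n^2-1$ fails at $n=1$.
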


\begin{proof}
A result of Chrobak \cite{Chrobak:1986}, as corrected by
To \cite{To:2009}, states that every unary $n$-state
NFA is equivalent to one
consisting of a ``tail'' of at most $O(n^2)$ states, followed by a single
nondeterministic state that leads to a set of cycles, each of which has
at most $n$ states.  The size of the tail was proved to be at most
$n^2-2$ by Geffert \cite{Geffert:2007}.

Now we use the same argument as for DFAs above.
\endpf
\end{proof}

\begin{openproblem}
Find better bounds on $\nsep(w,x)$ for $|w|=|x| = n$, as a function of $n$.
\end{openproblem}

\begin{theorem}
We have $\nsep(w,x) =
\nsep(w^R, x^R)$.
\end{theorem}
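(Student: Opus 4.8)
The plan is to produce, from any NFA separating $w$ from $x$, a reversed NFA with the same number of states separating $w^R$ from $x^R$. This gives $\nsep(w^R,x^R) \le \nsep(w,x)$, and applying the identical construction to the pair $(w^R,x^R)$ — using $(w^R)^R = w$ and $(x^R)^R = x$ — yields the opposite inequality, hence equality. So I only need to establish one direction and invoke symmetry.

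First I would take an NFA $M = (Q, \Sigma, \delta, q_0, F)$ with $\nsep(w,x)$ states that accepts $w$ and rejects $x$, and apply the standard reversal construction: reverse every transition and interchange the roles of initial and final states, producing a machine recognizing $L(M)^R$. Since $w \in L(M)$ forces $w^R \in L(M)^R$, and $x \notin L(M)$ forces $x^R \notin L(M)^R$, this reversed machine separates $w^R$ from $x^R$ in the correct direction, with the same state set.

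The main obstacle is the bookkeeping about initial states: reversing an NFA turns its \emph{set} of final states into a \emph{set} of initial states, whereas (matching the single-start-state convention used for DFAs in this paper) I want exactly one initial state. I would remove this difficulty by first normalizing $M$ to have a single accepting state, at no cost in states. Indeed, since $M$ accepts $w$, there is an accepting run of $w$ ending in some $f \in F$; replacing $F$ by $\{f\}$ keeps $w$ accepted (that run survives) and keeps $x$ rejected (any run of $x$ ending in $f$ would already have been accepting in $M$, contradicting rejection). With $F = \{f\}$, the reversal $M^R = (Q, \Sigma, \delta^R, f, \{q_0\})$, where $q \in \delta^R(p,a)$ iff $p \in \delta(q,a)$, has the single initial state $f$ and the same state set $Q$.

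Finally I would verify the run correspondence: a run of $M^R$ on a word $u$ from $f$ to $q_0$ is precisely a run of $M$ on $u^R$ from $q_0$ to $f$, read backwards, so $M^R$ accepts $u$ if and only if $M$ accepts $u^R$. Taking $u = w^R$ and $u = x^R$ confirms that $M^R$ accepts $w^R$ and rejects $x^R$, giving $\nsep(w^R, x^R) \le |Q| = \nsep(w,x)$; symmetry then completes the proof. I expect the only genuinely delicate point to be the single-accepting-state normalization, which is exactly what makes the state count match on the nose rather than losing one state to a merged start state.
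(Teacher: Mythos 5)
Your proof is correct and takes essentially the same route as the paper: the paper also reverses all transitions of a minimal NFA $M$, takes as the new initial state a single state reached by $M$ on $w$, and makes $q_0$ the unique final state, so that runs on $u$ in the reversed machine correspond to runs on $u^R$ in $M$. Your explicit single-final-state normalization in fact sharpens a detail the paper leaves implicit --- its ``any one element of $\delta(q_0,w)$'' must be an element of $\delta(q_0,w)\cap F$ for the rejection of $x^R$ to follow, which is exactly what your choice of $f$ guarantees --- and your symmetry argument for the reverse inequality is likewise tacit in the paper's statement of equality.
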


\begin{proof}
  Let $M$ be an NFA with the smallest number of states accepting $w$
  and rejecting $x$.  Now make a new NFA $M'$ with initial state equal
  to any one element of $\delta(q_0, w)$ and final state $q_0$, and
  all other transitions of $M$ reversed.  Then $M'$ accepts $w^R$.
  But $M'$ rejects $x^R$.  For if $M'$ accepted $x^R$ then $M$ would
  also accept $x$, since the input string and transitions are
  reversed.
\endpf
\end{proof}

\section{Separation by 2DPDA's}

In \cite{Currie&Petersen&Robson&Shallit:1999}, the authors showed that
words can be separated with small context-free
grammars (and hence small PDA's).  In this section we observe 

\begin{proposition}
Two distinct
words of length $n$ can be separated by a 2DPDA of size $O(\log n)$.
\end{proposition}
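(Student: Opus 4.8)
The plan is to exploit the two great powers of a 2DPDA --- a two-way input head and a pushdown stack --- to locate a single position where the words disagree, using only $O(\log n)$ states. Since $w \neq x$ and $|w| = |x| = n$, there is a position $i$ at which $w_i \neq x_i$; we build the machine knowing $w$, $x$, and hence $i$. If the machine can move its head to position $i$ and read the symbol there, it accepts exactly when that symbol equals $w_i$, thereby accepting $w$ and rejecting $x$. The entire difficulty is thus reduced to steering the head to the hardcoded position $i$ with a number of states logarithmic in $n$.

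First I would write $i$ in binary as $b_{m-1}b_{m-2}\cdots b_0$, where $m = O(\log n)$, and realize the Horner evaluation $v \leftarrow 2v + b_j$ (processing bits from most to least significant, starting from $v = 0$) directly on the input tape, maintaining the invariant that the head sits at cell $v$. The key gadget is \emph{doubling}: to pass from head position $p$ to head position $2p$, the machine scans leftward from cell $p$ to cell $0$, pushing two stack symbols for each cell crossed, so that the stack (above a bottom marker) now holds $2p$ symbols; it then scans rightward, popping one symbol per cell and halting when the bottom marker resurfaces, at which point the head has advanced exactly $2p$ cells and rests at position $2p$ with the stack restored. Adding the bit $b_j$ is then a single optional rightward step. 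Each bit is handled by a constant-size block of states, and stringing the $m$ blocks together yields the required $O(\log n)$ states; a final block reads cell $i$ and decides.

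Second, I would check the resource bounds. The stack alphabet is constant --- one counter symbol plus a bottom-of-stack marker --- so the description size is dominated by the state count, which is $O(\log n)$. I also need to confirm that the head never falls off the tape: since the target value is $i \le n$, every partial value $v$ produced during Horner's scheme satisfies $v \le i \le n$, and the largest cell the doubling gadget ever visits equals the post-doubling value $2v$, which is at most the next partial value and hence again at most $i \le n$. Thus all head excursions stay within $[0,n]$.

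The main obstacle is the doubling gadget: a pushdown stack offers only LIFO access, so naively ``reading off'' the current position $p$ to add $p$ to it would destroy the very count being copied. The two-way head resolves this --- the input cells between $0$ and $p$ serve as a reusable second counter, letting the machine manufacture $2p$ stack symbols in a single leftward sweep and then cash them back in as rightward head movement. Verifying that this synchronization between head position and stack height is preserved across all $m$ iterations, and that the bottom marker cleanly signals the end of each sweep, is the step that requires the most care.
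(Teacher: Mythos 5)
Your proof is correct and follows essentially the same approach as the paper's: locate the hardcoded differing position via its binary expansion, using the same doubling gadget (sweep left to the endmarker pushing two stack symbols per cell, then sweep right popping one per cell) as a subroutine within Horner's scheme. Your additional checks on stack alphabet size and head excursions staying within $[0,n]$ are sound elaborations of the paper's sketch.
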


\begin{proof}
Recall that
a 2DPDA is a deterministic pushdown automaton, with endmarkers surrounding
the input, and two-way access to the input tape.  Given distinct
strings $w, x$ of length $n$, they must differ in some position $p$
with $1 \leq p \leq n$.  Using $O(\log p)$ states, we can reach position
$p$ on the input tape and accept if (say) the corresponding character
equals $w[p]$, and reject otherwise.

Here is how to access position $p$ of the input.  We show how to go
from scanning position $i$ to position $2i$ using a constant number
of states:  we move left on the input, pushing two symbols per move 
on the stack, until the left endmarker is reached.  Now we move right,
popping one symbol per move, until the initial stack symbol is reached.
Using this as a subroutine, and applying it to the binary expansion of
$p$, we can, using $O(\log p)$ states, reach
position $p$ of the input.
\endpf
\end{proof}

\section{Permutation automata}

We conclude by relating the separating words problem to a natural problem
of algebra.

Instead of arbitrary automata, we could restrict our attention to
automata where each letter induces a permutation of the states (``permutation
automata''), as suggested by Robson \cite{Robson:1996}.  He obtained an
$O(n^{1/2})$ upper bound in this case.

For an $n$-state automaton, the action of each letter can be viewed as
an element of $S_n$, the symmetric group on $n$ elements.

Turning the problem around, then, we could ask:  what is the shortest pair
of distinct equal-length binary words $w, x$, such that for all morphisms
$\sigma:\lbrace 0,1 \rbrace^* \rightarrow S_n$ we have
$\sigma(w) = \sigma(x)$?
Although one might suspect that the answer is $\lcm(1,2,\ldots, n)$,
for $n = 4$, there is a shorter pair (of length 11):
$00000011011$ and $11011000000$.

Now if $\sigma(w) = \sigma(x)$ for all $\sigma$, then  (if we define
$\sigma(x^{-1}) = \sigma(x)^{-1}$) we have that
$\sigma(wx^{-1}) = $ the identity permutation for all $\sigma$.

Call any nonempty word $y$ over the letters $0, 1, 0^{-1}, 1^{-1}$ an 
\textit{identical relation} if $\sigma(y) = $ the identity for all
morphisms $\sigma$.  
We say $y$ is \textit{nontrivial} if $y$ contains no occurrences of
$0 0^{-1}$ and $1 1^{-1}$.

What is the length $\ell$ of the shortest nontrivial identical relation over
$S_n$?
Recently Gimadeev and Vyalyi \cite{Gimadeev&Vyalyi:2010}
proved $\ell = 2^{O(\sqrt{n} \log n)}$.

\section{Acknowledgments}

Thanks to Martin Demaine and Anna Lubiw, who participated in the
open problem session at MIT where these problems were discussed.

\newcommand{\noopsort}[1]{} \newcommand{\singleletter}[1]{#1}

\end{document}